\tikzset{Rightarrow/.style={double equal sign distance,>={Implies},->},
triple/.style={-,preaction={draw,Rightarrow}},
quadruple/.style={preaction={draw,Rightarrow,shorten >=0pt},shorten >=1pt,-,double,double
distance=0.2pt}}
\definecolor{darkblue}{rgb}{0,0,0.3}
\newtheorem{thm}{Theorem}[section]
\newtheorem{prop}[thm]{Proposition}
\theoremstyle{definition}
\theoremstyle{remark}
\newcommand\nbd\nobreakdash
\newcommand{\Cat}{{\mathcal{C}\mspace{-2.mu}\mathit{at}}}
\newcommand{\nCat}[1]{{#1}\hbox{\protect\nbd-}\kern1pt\Cat}	    
\newcommand{\s}{\mathcal{S}\mspace{-2.mu}\text{et}_{\Delta}}
\newcommand{\tr}[2]{\mathchoice
	{#1\raise -1.8pt\vbox{\hbox{$\kern -.8pt/\mathsmaller{#2} $}}}
	{#1\raise -1.8pt\vbox{\hbox{$\kern -.8pt/#2$}}\kern .8pt}
	{#1\raise -1.8pt\vbox{\hbox{$\scriptstyle\kern -.8pt /#2$}}}
	{#1\raise -1.8pt\vbox{\hbox{$\scriptscriptstyle\kern -.8pt /#2$}}}}
\newcommand{\trbis}[2]{\mathchoice
	{#1\raise -1.8pt\vbox{\hbox{$\kern -.8pt\mathsmaller{/#2} $}}}
	{#1\raise -1.8pt\vbox{\hbox{$\kern -.8pt\mathsmaller{/#2}$}}\kern .8pt}
	{#1\raise -1.8pt\vbox{\hbox{$\scriptstyle\kern -.8pt /#2$}}}
	{#1\raise -1.8pt\vbox{\hbox{$\scriptscriptstyle\kern -.8pt /#2$}}}}
\newcommand{\overslice}[2]{\mathchoice
	{#1\raise -1.8pt\vbox{\hbox{$\kern -.8pt\mathsmaller{#2/} $}}}
	{#1\raise -1.8pt\vbox{\hbox{$\kern -.8pt\mathsmaller{#2/}$}}\kern .8pt}
	{#1\raise -1.8pt\vbox{\hbox{$\scriptstyle\kern -.8pt #2/$}}}
	{#1\raise -1.8pt\vbox{\hbox{$\scriptscriptstyle\kern -.8pt #2/$}}}}
\def\labelstylecode#1{%
	\pgfkeys@split@path%
	\edef\label@key{/triangle/label/\pgfkeyscurrentname}%
	\edef\style@key{\pgfkeyscurrentkey/.@val}%
	\def\temp@a{#1}%
	\def\temp@b{\pgfkeysnovalue}%
	\ifx\temp@a\temp@b
	\pgfkeysgetvalue{\label@key}\temp@a
	\ifx\temp@a\temp@b\else
	\pgfkeysalso{commutative diagrams/.cd, \style@key}%
	\fi
	\else
	\pgfkeys{\style@key/.code = \pgfkeysalso{#1}}%
	\fi}
\def\arrowstylecode#1{%
	\edef\style@key{\pgfkeyscurrentkey/.@val}%
	\def\temp@a{#1}%
	\def\temp@b{\pgfkeysnovalue}%
	\ifx\temp@a\temp@b
	\pgfkeysalso{commutative diagrams/.cd, \style@key}%
	\else
	\pgfkeys{\style@key/.code = \pgfkeysalso{#1}}%
	\fi}
\def\tr@abc{%
	\draw [/triangle/arrowstyle/012] (90:0.20) --
	node [/triangle/labelstyle/012] {
		\pgfkeysvalueof{/triangle/label/012}} (270:0.10);
}
\def\tr@#1#2{
	\begin{scope}[shift=#2, commutative diagrams/every diagram]
		
		\node (n{#1}0) at (150:1) {
			\pgfkeysvalueof{/triangle/label/0}};
		\node (n{#1}1) at (270:0.6) {
			\pgfkeysvalueof{/triangle/label/1}};
		\node (n{#1}2) at (30:1) {
			\pgfkeysvalueof{/triangle/label/2}};			
		
		\node (s#1) at (0,0) [circle, inner sep = 0pt,
		fit = (n{#1}0.center)(n{#1}1.center)(n{#1}2.center)] {};
		
		\begin{scope}[commutative diagrams/.cd, every arrow, every label]
			\ifcase #1
			\def\list{0/1, 1/2, 0/2}\or
			\def\list{0/1, 1/2, 0/2}\else
			\def\list{}\fi
			
			\foreach \s / \e in \list {
				\draw [/triangle/arrowstyle/\s\e] (n{#1}\s) --
				node [/triangle/labelstyle/\s\e] {
					\pgfkeysvalueof{/triangle/label/\s\e}} (n{#1}\e);
			}
			
			\ifcase #1
			\tr@abc\or
			\tr@abc
			\else\fi
			
		\end{scope}
	\end{scope}
}
\def\triangle#1{
	\pgfkeys{#1}
	\tr@{0}{(0:0)}
}
\def\labelstylecode#1{%
	\pgfkeys@split@path%
	\edef\label@key{/square/label/\pgfkeyscurrentname}%
	\edef\style@key{\pgfkeyscurrentkey/.@val}%
	\def\temp@a{#1}%
	\def\temp@b{\pgfkeysnovalue}%
	\ifx\temp@a\temp@b
	\pgfkeysgetvalue{\label@key}\temp@a
	\ifx\temp@a\temp@b\else
	\pgfkeysalso{commutative diagrams/.cd, \style@key}%
	\fi
	\else
	\pgfkeys{\style@key/.code = \pgfkeysalso{#1}}%
	\fi}
\def\arrowstylecode#1{%
	\edef\style@key{\pgfkeyscurrentkey/.@val}%
	\def\temp@a{#1}%
	\def\temp@b{\pgfkeysnovalue}%
	\ifx\temp@a\temp@b
	\pgfkeysalso{commutative diagrams/.cd, \style@key}%
	\else
	\pgfkeys{\style@key/.code = \pgfkeysalso{#1}}%
	\fi}
\def\sq@abc{%
	\draw [/square/arrowstyle/012] (235:0.25) --
	node [/square/labelstyle/012] {
		\pgfkeysvalueof{/square/label/012}} (235:0.6);
}
\def\sq@bcd{%
	\draw [/square/arrowstyle/123] (-54:0.25) --
	node [/square/labelstyle/123] {
		\pgfkeysvalueof{/square/label/123}} (-54:0.6);
}
\def\sq@acd{%
	\draw [/square/arrowstyle/023] (55:0.55) --
	node [/square/labelstyle/023] {
		\pgfkeysvalueof{/square/label/023}} (15:0.45);
}
\def\sq@abd{%
	\draw [/square/arrowstyle/013] (125:0.55) --
	node [/square/labelstyle/013] {
		\pgfkeysvalueof{/square/label/013}} (165:0.45);
}
\def\sq@#1#2{
	\begin{scope}[shift=#2, commutative diagrams/every diagram]
		
		\foreach \i in {0,1,2,3} {
			\tikzmath{\a = 135 + (90 * \i);}
			\node (n{#1}\i) at (\a:1) {
				\pgfkeysvalueof{/square/label/\i}};
		}
		
		\node (s#1) at (0,0) [circle, inner sep = 0pt,
		fit = (n{#1}0.center)(n{#1}1.center)(n{#1}2.center)
		(n{#1}3.center)] {};
		
		\begin{scope}[commutative diagrams/.cd, every arrow, every label]
			\ifcase #1
			\def\list{0/1, 1/2, 2/3, 0/2, 0/3}\or
			\def\list{0/1, 1/2, 2/3, 1/3, 0/3}\else
			\def\list{}\fi
			
			\foreach \s / \e in \list {
				\draw [/square/arrowstyle/\s\e] (n{#1}\s) --
				node [/square/labelstyle/\s\e] {
					\pgfkeysvalueof{/square/label/\s\e}} (n{#1}\e);
			}
			
			\ifcase #1
			\sq@abc\sq@acd\or
			\sq@abd\sq@bcd
			\else\fi
			
		\end{scope}
	\end{scope}
}
\def\square#1{
	\pgfkeys{#1}
	\sq@{0}{(180:2)}\sq@{1}{(0:2)}
	
	\begin{scope}[commutative diagrams/.cd, every arrow, every label]
		\draw[->] [shorten >=10pt, shorten <=10pt, /square/arrowstyle/0123] (s0) --
		node [/square/labelstyle/0123] {%
			\pgfkeysvalueof{/square/label/0123}} (s1);
		
	\end{scope}
}
\def\labelstylecode@pent#1{%
	\pgfkeys@split@path%
	\edef\label@key{/pentagon/label/\pgfkeyscurrentname}%
	\edef\style@key{\pgfkeyscurrentkey/.@val}%
	\def\temp@a{#1}%
	\def\temp@b{\pgfkeysnovalue}%
	\ifx\temp@a\temp@b
	\pgfkeysgetvalue{\label@key}\temp@a
	\ifx\temp@a\temp@b\else
	\pgfkeysalso{commutative diagrams/.cd, \style@key}%
	\fi
	\else
	\pgfkeys{\style@key/.code = \pgfkeysalso{#1}}%
	\fi}
\def\arrowstylecode@pent#1{%
	\edef\style@key{\pgfkeyscurrentkey/.@val}%
	\def\temp@a{#1}%
	\def\temp@b{\pgfkeysnovalue}%
	\ifx\temp@a\temp@b
	\pgfkeysalso{commutative diagrams/.cd, \style@key}%
	\else
	\pgfkeys{\style@key/.code = \pgfkeysalso{#1}}%
	\fi}
\def\pent@abc{%
	\draw [/pentagon/arrowstyle/012] (198:0.45) --
	node [/pentagon/labelstyle/012] {
		\pgfkeysvalueof{/pentagon/label/012}} (198:0.8);
}
\def\pent@bcd{%
	\draw [/pentagon/arrowstyle/123] (126:0.45) --
	node [/pentagon/labelstyle/123] {
		\pgfkeysvalueof{/pentagon/label/123}} (126:0.8);
}
\def\pent@cde{%
	\draw [/pentagon/arrowstyle/234] (54:0.45) --
	node [/pentagon/labelstyle/234] {
		\pgfkeysvalueof{/pentagon/label/234}} (54:0.8);
}
\def\pent@ade{%
	\draw [/pentagon/arrowstyle/034] (-40:0.6) --
	node [/pentagon/labelstyle/034] {
		\pgfkeysvalueof{/pentagon/label/034}} (-5:0.5);
}
\def\pent@abe{
	\draw [/pentagon/arrowstyle/014] (-70:0.55) --
	node [/pentagon/labelstyle/014] {
		\pgfkeysvalueof{/pentagon/label/014}} (-110:0.55);
}
\def\pent@acd{%
	\draw [/pentagon/arrowstyle/023] (55:0.3) --
	node [/pentagon/labelstyle/023] {
		\pgfkeysvalueof{/pentagon/label/023}} (125:0.3);
}
\def\pent@bde{%
	\draw [/pentagon/arrowstyle/134] (-5:0.4) --
	node [/pentagon/labelstyle/134] {
		\pgfkeysvalueof{/pentagon/label/134}} (35:0.5);
}
\def\pent@ace{%
	\draw [/pentagon/arrowstyle/024] (-45:0.45) --
	node [/pentagon/labelstyle/024] {
		\pgfkeysvalueof{/pentagon/label/024}} (-45:0.1);
}
\def\pent@abd{%
	\draw [/pentagon/arrowstyle/013] (-90:0.22) --
	node [/pentagon/labelstyle/013] {
		\pgfkeysvalueof{/pentagon/label/013}} (-150:0.46);
}
\def\pent@bce{%
	\draw [/pentagon/arrowstyle/124] (188:0.4) --
	node [/pentagon/labelstyle/124] {
		\pgfkeysvalueof{/pentagon/label/124}} (150:0.55);
}
\def\pent@#1#2{
	\begin{scope}[shift=#2, commutative diagrams/every diagram]
		
		\foreach \i in {0,1,2,3,4} {
			\tikzmath{\a = 270 - (72 * \i);}
			\node (n{#1}\i) at (\a:1) {
				\pgfkeysvalueof{/pentagon/label/\i}};
		}
		
		\node (p#1) at (0,0) [circle, inner sep = 0pt,
		fit = (n{#1}0.center)(n{#1}1.center)(n{#1}2.center)
		(n{#1}3.center)(n{#1}4.center)] {};
		
		\begin{scope}[commutative diagrams/.cd, every arrow, every label]
			\ifcase #1
			\def\list{0/1, 1/2, 2/3, 3/4, 0/4, 0/2, 0/3}\or
			\def\list{0/1, 1/2, 2/3, 3/4, 0/4, 1/3, 1/4}\or
			\def\list{0/1, 1/2, 2/3, 3/4, 0/4, 0/2, 2/4}\or
			\def\list{0/1, 1/2, 2/3, 3/4, 0/4, 0/3, 1/3}\or
			\def\list{0/1, 1/2, 2/3, 3/4, 0/4, 1/4, 2/4}\else
			\def\list{}\fi
			
			\foreach \s / \e in \list {
				\draw [/pentagon/arrowstyle/\s\e] (n{#1}\s) --
				node [/pentagon/labelstyle/\s\e] {
					\pgfkeysvalueof{/pentagon/label/\s\e}} (n{#1}\e);
			}
			
			\ifcase #1
			\pent@abc\pent@acd\pent@ade\or
			\pent@bcd\pent@bde\pent@abe\or
			\pent@cde\pent@ace\pent@abc\or
			\pent@ade\pent@abd\pent@bcd\or
			\pent@abe\pent@bce\pent@cde
			\else\fi
			
		\end{scope}
	\end{scope}
}
\def\pentagon#1{
	\pgfkeys{#1}
	\pent@{2}{(270:3)}\pent@{0}{(198:3)}\pent@{3}{(126:3)}
	\pent@{1}{(54:3)}\pent@{4}{(342:3)}
	
	\begin{scope}[commutative diagrams/.cd, every arrow, every label]
		\draw [/pentagon/arrowstyle/0123] (p0) --
		node [/pentagon/labelstyle/0123] {
			\pgfkeysvalueof{/pentagon/label/0123}} (p3);
		
		\draw [/pentagon/arrowstyle/0134] (p3) --
		node [/pentagon/labelstyle/0134] {
			\pgfkeysvalueof{/pentagon/label/0134}} (p1);
		
		\draw [/pentagon/arrowstyle/1234] (p1) --
		node [/pentagon/labelstyle/1234] {
			\pgfkeysvalueof{/pentagon/label/1234}} (p4);
		
		\draw [/pentagon/arrowstyle/0234] (p0) --
		node [/pentagon/labelstyle/0234] {
			\pgfkeysvalueof{/pentagon/label/0234}} (p2);
		
		\draw [/pentagon/arrowstyle/0124] (p2) --
		node [/pentagon/labelstyle/0124] {
			\pgfkeysvalueof{/pentagon/label/0124}} (p4);
		
		\draw [/pentagon/arrowstyle/01234] (270:0.75) --
		node [/pentagon/labelstyle/01234] {
			\pgfkeysvalueof{/pentagon/label/01234}} (90:0.75);
	\end{scope}
}
\newcommand{\R}{\mathbb{R}}
\newtheorem{problem}[thm]{Problem}
\newtheorem{rmk}[thm]{Remark}
\setlist[itemize]{leftmargin=*}
\setlist[enumerate]{leftmargin=*}
\renewcommand{\tocsection}[3]{%
\indentlabel{\@ifnotempty{#2}{\parbox[b]{3ex}{\bfseries\ignorespaces#1 #2}}}\bfseries#3} 
\renewcommand{\tocsubsection}[3]{%
\indentlabel{\@ifnotempty{#2}{\hspace{1.6em}\parbox[b]{5ex}{\ignorespaces#1 #2}}}#3}
\title{Fire sales, the LOLR and bank runs with continuous asset liquidity}
\author{Ulrich Bindseil}
\address{European Central Bank, DG-Market Infrastructure and Payments}
\email{ulrich.bindseil@ecb.europa.eu}
\author{Edoardo Lanari}
\address{Institute of Mathematics, Czech Academy of Sciences \\ \v{Z}itn\'a 25 \\115 67   Praha 1\\ Czech Republic}
\email{edoardo.lanari.el@gmail.com}
\urladdr{https://edolana.github.io/}
\date{\today}
\begin{document}

\begin{abstract}
	Bank’s asset fire sales and recourse to central bank credit are modelled with continuous asset liquidity, allowing to derive the liability structure of a bank. Both asset sales liquidity and the central bank collateral framework are modeled as power functions within the unit interval.  Funding stability is captured as a strategic bank run game in pure strategies between depositors. Fire sale liquidity and the central bank collateral framework  determine jointly the ability of the banking system to deliver maturity transformation without endangering financial stability. The model also explains why banks tend to use the least liquid eligible collateral with the central bank and why a sudden non-anticipated reduction of asset liquidity, or a tightening of the collateral framework, can trigger a bank run. The model also shows that the collateral framework can be understood, beyond its aim to protect the central bank, as financial stability and non-conventional monetary policy instrument. 
\end{abstract}
\keywords{Fire Sales, lender-of-last resort, bank run, capital structure, asset liquidity, collateral, bank intermediation spreads} 
\maketitle
\makeatletter
\def\blfootnote{\gdef\@thefnmark{}\@footnotetext}
\makeatother%
\blfootnote{JEL classification: C61, E43, E58, G.21, G32, G33}
\blfootnote{Opinions expressed are those of the authors. We wish to thank Steffano Corradin, Philipp Harms, Florian Heider, Slobodan Jelic, Simone Manganelli, Andres Manzanares, Fernando Monar, Ken Nyholm, Isabel Schnabel, Leo von Thadden, and an anonymous referee. The second named author gratefully acknowledges the support of Praemium Academiae of M.~Markl and RVO:67985840.}
\section{Introduction}
The model proposed in this paper sheds new light on how asset liquidity and the central bank collateral framework affect the liability structure of banks, financial stability and monetary policy. The financial crisis of 2007/2008 is said to also have been triggered by the insufficient asset liquidity buffers of banks relative to their short-term liabilities. These insufficient buffers would have led to an (at least temporarily) excessive reliance on central bank funding (\cite{Basel}). The economic tradeoffs between the efficiency of the banking system in delivering maturity transformation and financial stability is also crucial when assessing the net benefits of regulation for society. In the words of the Turner review (\cite{FSA}): \\
\newline
\textit{``[T]here is a tradeoff to be struck. Increased maturity transformation delivers benefits to the non-bank sectors of the economy and produces term structures of interest rates more favourable to long-term investment. But the greater the aggregate degree of maturity transformation, the more the systemic risks and the greater the extent to which risks can only be offset by the potential for central bank liquidity assistance.''}\\
\newline
While the central bank collateral framework got relatively limited attention in academic writing, it is one of the most complex and economically significant elements of monetary policy implementation(e.g. \cite{Nyborg}, \cite{Bindseil2017}, \cite{Bindseil2017-2}). Unencumbered central bank eligible collateral is potential liquidity, as it can, in principle, be swapped into central bank money. It is therefore not exaggerated to argue that the collateral framework must be an important ingredient of any theory of liquidity crises (as noted in \cite{Bagehot}), and of any monetary theory. The Markets Committee of the BIS summarizes in \cite{Markets} various measures taken during the Lehman crisis by central banks (p 8-9):\\
\newline
\textit{``During the height of the financial crisis in 2008–09, a number of central banks introduced, to varying degrees, crisis management measures such as a temporary acceptance of additional types of collateral, a temporary lowering of the minimum rating requirements of existing eligible collateral or a temporary relaxation of haircut standards.''}\\
\newline
The COVID-2019 crisis again led central banks to relax their collateral framework, despite the fact that credit quality of issuers and counterparties did not improve. The ECB announced an “unprecedented” package on 7 April 2020 (from ECB Press release, \cite{ECB20}):\\
\newline
\textit{``The Governing Council of the European Central Bank (ECB) today adopted a package of temporary collateral easing measures to facilitate the availability of eligible collateral for Eurosystem counterparties to participate in liquidity providing operations… the Eurosystem is increasing its risk tolerance to support the provision of credit via its refinancing operations, particularly by lowering collateral valuation haircuts for all assets consistently.  … the Governing Council decided to temporarily increase its risk tolerance level in credit operations through a general reduction of collateral valuation haircuts by a fixed factor of 20\%.''}\\
\newline
Three strands of academic literature are relevant to the present paper. First, Rochet and Vives (\cite{Rochet}) is close to the present paper in the sense that it also models the role of fire sales and the central bank LOLR for banks’ funding stability, liquidity and solvency. The model of Rochet and Vives, however, takes strong simplifying assumptions regarding asset liquidity (only two types of assets are distinguished: cash and non-liquid assets). Beyond Rochet and Vives, there is an extensive more general multiple funding equilibrium literature such as represented by e.g. Morris and Shin (\cite{Morris}) under the headline of “global games”. This literature uses more general and sophisticated equilibrium concepts than the present paper, which limits itself to pure and dominant strategies of investors/depositors, and to the existence or not of a Strict Nash Equilibrium in the sense of Fudenberg and Tirole (\cite{Fudenberg}).\footnote{Other relevant papers that model funding liquidity, leverage and asset liquidity are Brunnermeier and Pedersen (\cite{Brunnermeier}), Acharya, Gale and Yorulmazer (\cite{Acharya-Gale}), and Acharya and Viswanathan (\cite{Acharya}).} Again, none of these papers however models explicitly fire sales and the central bank LOLR together, nor do they allow to derive the bank’s liability structure from this. Finally, there is a relatively recent bank run literature integrating bank runs and the LOLR into macro-economic models, such as Kiyotaki and Gertler (\cite{Gertler}) and De Fiore, Hoerova and Uhlig (\cite{DeFiore}). These models are more ambitious in terms of integrating bank runs in a general equilibrium macro context, but the modelling of the limits of banks to issue short term deposits is rather ad hoc (a moral hazard problem of the bank manager is postulated to limit deposit issuance, namely bank managers would be able to divert asset from the bank to enrich themselves personally). In contrast, in the model proposed here, the bank’s ability to issue short term deposits follows endogenously from the threat of a bank run. 
\\

Second, Ashcraft et al (\cite{Ashcraft-Garleanu}) relates to the present model in the sense that central bank haircut policies are identified and modeled as a monetary policy instrument (see also Chapman et al, \cite{Chapman}). Ashcraft et al assume that banks refinance assets at the central bank and that the haircut determines the leverage ratio and thus the funding costs of assets, being a weighted average of the risk free rate and the shadow cost of equity (see also Brunnermeier and Pedersen, \cite{Brunnermeier}).\\

Third, the present paper explains the spread between the risk free rate (which is close to the rate of central bank credit operations and the rate of remuneration of overnight deposits of households with the banking system) and the actual funding costs of the real economy (or the effective monetary conditions for the economy). In this sense, the paper contributes to the program such as defined for instance by Woodford (\cite{Ashcraft-Garleanu}) to enrich the analysis of monetary policy in particular by capturing more explicitly the spread between the central bank credit operations rate and the actual monetary conditions as they are felt by the real economy when seeking bank or market funding. Indeed, the present paper shows that a drop in bank asset liquidity and/or an increase in central bank haircuts both tighten effective monetary conditions in the sense that they reduce the ability of the banking system to undertake maturity transformation, and hence, everything else equal, will increase the share of “expensive” bank funding sources such as long term bonds and equity, implying that also the lending rates that a competitive banking system is able to offer have to increase.\\

Empirical evidence of the different degrees of banks’ asset liquidity, focusing on the case of the euro area, is provided by Bindseil, Dragu, Düring and von Landesberger (\cite{Bindseil2017-2}). The paper provides a cross-section analysis of liquidity properties and central bank collateral haircuts of the euro area fixed income universe. 
That asset liquidity is continuous, and that it fluctuates over time, has been described empirically in the finance literature, such as recently in D\"otz and Weth (\cite{Dotz}), who also argue that liquidation will be carried out in a liquidity pecking order style and that marginal liquidation costs increase in redemptions (p. 9-11), i.e. as assumed in the present paper. They construct a sample of corporate bond fund asset liquidity data covering the 80 months before June 2016, referring to around 700 thousand security holdings positions (p. 12). The liquidity measure consists in monthly averages of bid-ask spreads. The following figure shows continuous portfolio liquidity, put at any moment in time in a “liquidity pecking order” (i.e. securities ranked from the most to the least liquid). Obviously, the least liquid assets held by a corporate bond fund will still be more liquid than many other bank assets, such as in particular loan portfolios. Still, D\"otz and Weth illustrates the idea of continuous asset liquidity and the changes of asset liquidity over time. 
\begin{figure}[H]
\includegraphics[scale = 0.8]{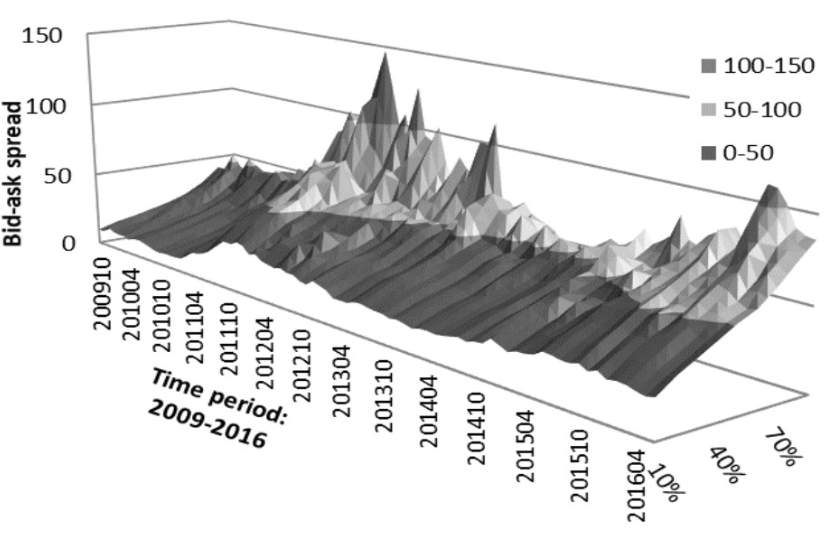}
\caption{Liquidity structure of corporate bond funds, according to D\"otz and Weth \cite{Dotz}}
\end{figure}
The paper also sheds new light on a number of other related debates. For example, first, Acosta-Smith et al (\cite{Aco}), who study the relationship between bank capital and liquidity transformation find that ``banks engage in less liquidity transformation when they have higher capital'', but our paper suggests that there are alternative causalities that could explain this empirical pattern. Second, the literature on fire sales, as summarized by Shleifer and Vishny (\cite{Shlei}), generally regards fire sales as entailing ``systemic risk and significant [negative] externalitie''(p. 43). Our paper models potential fire sales as being an integral part of maturity transformation by banks, whereby indeed, actual fire sales would not materialize in the absence of negative shocks on asset liquidity or asset values. Third, the present model provides further insights into empirical phenomena, such as those observed by Boyson et al (\cite{boy}), which investigate ``which factor, liquidity or solvency, is more important for financial crisis''. Several further empirical hypothesis supported in that paper are captured also in our model, and therefore are supported by additional theoretical explanations. Finally, the paper also sheds new light on why both liquidity regulation, and the lender of last resort are needed, such as discussed by Carlson et al (\cite{carl}). 
\section{A model of funding stability with continuous asset liquidity and haircuts}
Throughout this paper, the following stylized bank balance sheet is assumed, with  total length  set to unit. Assets are heterogeneous in a continuous sense, while there are three types of liabilities which are each separately homogenous equity (\(e\)), long term debt (\(t\)), and short term deposits held by two depositors (each depositor holding therefore \((1-t-e)/2\) short term deposits), with \(e,t \geq 0, \ e+t \in [0,1]\).
\begin{figure}[H]
\caption{Stylised bank balance sheet}
\begin{tabular}{|llc|crr|}
\hline
&\text{Assets} &&\text{Liabilities} &&\\
\hline
\text{Assets} & & 1& \text{Short term debt 1} && (1-t-e)/2\\
&& & \text{Short term debt 2} && (1-t-e)/2\\
&&& \text{Long term debt (``term funding'')} && t\\
&&& \text{equity} && e\\		
\hline
\end{tabular}
\end{figure} 
Short term deposits can be subject to a bank run (Diamond and Dybvig, \cite{Diamond}).  Banks could address this by not relying, or at least not extensively relying on short term funding. However, in general investors prefer to hold short term debt instruments over long term debt instruments, and hence require a higher interest rate on long term debt. In other words, long term debt is associated with higher funding costs for the bank. Banks could also  hold sufficient amounts of liquid assets, both in the sense of being able to liquidate these assets in case of need, and in order to be able to pledge them with the central bank at limited haircuts. However, on average, liquid assets generate lower return than illiquid ones. Consider now in more detail the different balance sheet positions.
\subsection{Bank assets}
Assets are continuous with regard to two liquidity characteristics: (i) Asset liquidity as measured by the “fire sale” discount to be accepted if an asset is to be sold in the short run; (ii) valuation haircut if submitted as central bank collateral.
\paragraph{\textbf{Bank assets as central bank collateral}}
Assume that assets are ranked from those which the central bank considers to require the lowest haircuts to those requiring the highest ones.  The central bank collateral haircut function is set to be from the asset unit interval \([0,1]\) into the possible haircut unit interval \([0,1]\). Assume that it has the following functional form with \(\delta \geq 0\):
\begin{equation}
	h(x)=x^{\delta}
\end{equation}
The power function in the unit interval captures broadly the properties of a typical central bank haircut framework: haircuts for the most liquid assets will be close to zero, while haircuts for the least liquid assets accepted will be high, and an often significant part of assets will not be accepted at all, which is equivalent to a \(100\%\) haircut. If \(\delta\) is close to 0, then the haircuts increase and converge quickly towards 1. If in contrast \(\delta\) is large (say 10) then haircuts stay at close to zero for a while and only start to increase in a convex manner when approaching the least liquid assets. The total haircut (and the average haircut) if all assets are pledged is \(\frac{1}{\delta +1}\), and hence potential central bank credit is \(\frac{\delta}{\delta +1}\). This is obtained from the integration rule \(\int x^{\delta}dx = \frac{1}{\delta +1}x^{\delta + 1}\). For example, in the case of the Eurosystem, out of EUR 30 trillion of aggregated bank assets, the value of central bank eligible collateral after haircuts that could be used at any moment in time is around EUR 5 trillion. The eligibility criteria and haircut matrices are provided by the ECB and one can match this information in principle with an informed guess of banks’ assets holdings. This implies that the effective average haircut applied by the Eurosystem to (the entirety of) bank assets is around 83\% (25 trillion /30 trillion), and central bank refinancing power is around 17\% of eligible assets, which approximately implies, if one assumes a power function as done above,  a parameter value \(\delta=0.2\). ECB (2015,31) provides an overview of the ECB’s haircut scheme.
\footnote{\href{https://www.ecb.europa.eu/pub/pdf/other/financial_risk_management_of_eurosystem_monetary_policy_operations_201507.en.pdf?b8a471f2b2dcf413890c1d02f7288648}{Link: financial risk management of eurosystem monetary policy operations}}
\paragraph{\textbf{Liquidity of bank assets}}
Now consider asset liquidity in the sense of the ability of banks to sell assets in the short term without this inflicting a loss for the bank. Assume again that assets are ranked from the most liquid to the least liquid, and that the fire sale discount function is a function from \([0,1]\) to \([0,1]\) with the following function form, for \(\Theta \geq 0\):
\begin{equation}
d(x)=x^{\Theta}
\end{equation}
The smaller \(\Theta\), the faster fire sale losses increase towards 100\% when moving from the most liquid to the least liquid assets.  If a certain share \(x\) of the assets has to be sold, then the fire sale discounts will have to be booked as a loss and reduce equity. Assuming that the bank starts with the most liquid assets, the loss will be \(\frac{1}{\Theta + 1} x^{\Theta + 1}\). Empirical estimates of default costs in the corporate finance literature vary between 10 and 44\% (see e.g. Glover, \cite{Glover}, and Davydenko et al, \cite{Davy}). This cost can be interpreted as the liquidation cost of assets, captured in the parameter \(\Theta\). Liquidation of all assets will lead to a damage of \(\frac{1}{\Theta + 1}\), so that the remaining asset value will be \(\frac{\Theta}{\Theta + 1}\). If default cost is 10\%, this would mean that \(\Theta = 9\), and if default cost is 44\%, then \(\Theta = 1.27\). For a value of default costs in the middle of the empirical estimates of say 25\%, one obtains \(\Theta = 3\). 
\subsection{Bank liabilities}
Four types of liabilities are distinguished: (i) \textit{Short term liabilities} are equally split to two ex-ante identical depositors; (ii) \textit{Long term debt} does not mature within the period considered, and are ranked pari passu with short term debt; (iii) \textit{Equity} is junior to all other liabilities, and cannot flow out either; (iv) \textit{Central bank borrowing} is zero initially, but can substitute for outflows of short term liabilities in case of need. It is collateralized and therefore the central bank acquires in case of default ownership of the assets pledged as collateral. Apart from this, the central bank claim ranks pari passu, i.e. remaining claims after collateral liquidation are treated in the same way as an initial unsecured deposit. 
\subsection{Timeline}
The model is based on the following timeline:
\begin{itemize}
\item Initially, the bank has the balance sheet composition as shown in figure 1.
\item Short term depositors/investors play a strategic game with two alternative actions: to run or not to run. ``Running'' means to withdraw the deposits. If successful, this means that the value of deposits is afterwards equal to the initial value minus a small cost, \(\epsilon\), capturing the transaction cost of withdrawing the deposits.
\item It is not to be taken for granted that depositors can withdraw all their funds. Outflows need to be funded by the bank in some way: (i) Recourse to central bank credit, assuming that the bank has sufficient eligible collateral. (ii) Quick liquidation of assets (``fire sales'').  (iii) If it is impossible to pay out the depositors that want to withdraw their deposits, Illiquidity induced default will occur. If illiquidity induced default occurs, all (remaining) assets need to be liquidated, and corresponding default related losses occur. 
\item If the bank was not closed due to illiquidity, still its solvency needs verification after it had to undertake fire sales and if capital is negative, the bank is resolved. Again, it is assumed in this case that the full costs of immediately liquidating all assets materialize. 
\end{itemize}
\subsection{Strict Nash No-Run (SNNR) equilibrium}
The decision set of depositor \(i=1,2\) from which he will choose his decision \(D_i\)  consists in \(\{K_i,R_i\}\), whereby ``\(K\)'' stands for ``keeping''  deposits and ``R'' stands for ``run''. Let \(U_i=U_i(D_1,D_2)\) be the pay-off function of depositor \(i\). Note that the strategic game is symmetric, i.e. \(U_1(K_1,K_2)=U_2(K_1,K_2), \ U_1(K_1,R_2)=U_2(R_1,K_2), \ U_1(R_1,K_2)=U_2(K_1,R_2), \ U_1(R_1,R_2)=U_2(R_1,R_2)\). This allows to express in the rest of the paper conditions only with reference to one of the two players, say depositor 1. 

A Strict Nash equilibrium is defined as a strategic game in which each player has a unique best response to the other players’ strategies (see Fudenberg and Tirole, \cite{Fudenberg}). A Strict Nash No-Run (SNNR) equilibrium in the run game is therefore one in which the no-run choice dominates the ``run'' choice regardless of what the other depositors decide, i.e. an SNNR equilibrium is defined by:
\begin{equation}
U_1(K_1,K_2)>U_1(R_1,K_2) \quad \text{and} \quad U_1(K_1,R_2)>U_1(R_1,R_2)	 
\end{equation}
\section{Pure reliance on either central bank credit or on asset fire sales}
\subsection{Pure reliance on central bank funding}
Assume first that asset liquidation is not an option, say because markets are totally frozen, i.e. \(\Theta =0\). In this case the analysis can focus on the sufficiency or not of buffers for central bank credit. The following proposition states the necessary condition for funding stability of banks in this case. 
\begin{prop}
\label{prop1}
If \(\Theta =0\), and assuming a small transaction cost \(\epsilon\) of withdrawing deposits, a Strict Nash No-Run (SNNR) equilibrium prevails if and only if \(\frac{\delta}{\delta+1}\geq \frac{(1-t-e)}{2}\), i.e. the liquidity buffer based on recourse to the central bank is not smaller than one half of the short term deposits.	
\end{prop}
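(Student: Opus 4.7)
The plan is to compute depositor~1's payoff $U_1$ at each of the four strategy profiles $(D_1,D_2)$ and then to check the two SNNR inequalities directly. The starting observation is that setting $\Theta=0$ collapses the fire-sale discount to $d(x)=x^{0}=1$ on $(0,1]$, and the cumulative liquidation loss $\frac{1}{\Theta+1}\,x^{\Theta+1}$ evaluated at $x=1$ equals $1$; hence any recourse to asset sales produces zero net proceeds and, in particular, default-time liquidation reduces the asset side to zero. Consequently the only effective liquidity source is central bank credit, whose maximum capacity is $\frac{\delta}{\delta+1}$.

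For the sufficient direction, assume $\frac{\delta}{\delta+1}\geq\frac{1-t-e}{2}$. At $(K_1,K_2)$ there is no outflow and $U_1=(1-t-e)/2$. At $(R_1,K_2)$ and at $(K_1,R_2)$ the single withdrawal $(1-t-e)/2$ is met by CB credit, the bank remains solvent, the running depositor cashes out the deposit less the transaction fee $\epsilon$, and the staying depositor retains the full claim. At $(R_1,R_2)$ two subcases arise: if $\frac{\delta}{\delta+1}\geq 1-t-e$ both withdrawals are paid and $U_1(R_1,R_2)=(1-t-e)/2-\epsilon$; otherwise illiquidity-induced default is triggered, liquidation yields zero recovery, and $U_1(R_1,R_2)=0$. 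In every subcase one obtains $U_1(K_1,K_2)>U_1(R_1,K_2)$ and $U_1(K_1,R_2)>U_1(R_1,R_2)$, so SNNR holds.

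For the necessary direction, assume $\frac{\delta}{\delta+1}<\frac{1-t-e}{2}$. Then even a single withdrawal of $(1-t-e)/2$ exceeds the maximal CB credit, forcing the bank into default regardless of the other depositor's choice; under $\Theta=0$ full liquidation wipes out the entire asset side and the pari passu claims recover zero. Therefore $U_1(K_1,R_2)=U_1(R_1,R_2)=0$, the second strict inequality in the SNNR definition fails, and no SNNR can exist.

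I anticipate the main obstacle to be the careful bookkeeping in the default scenario under $\Theta=0$: specifically, arguing that a short-term depositor recovers exactly zero whether they run or stay, so that the transaction cost $\epsilon$ cannot tip the balance in the necessary direction and the second SNNR inequality degenerates to $0>0$. Once that degeneracy is pinned down, the remainder is a mechanical four-case payoff comparison.
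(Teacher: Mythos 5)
Your proof is correct and follows essentially the same route as the paper: compute the $2\times 2$ payoff matrix under a case split on whether the central-bank capacity $\frac{\delta}{\delta+1}$ covers two withdrawals, one, or neither, and check the two SNNR inequalities. The only divergence is a bookkeeping convention in the default scenarios — the paper lets running depositors recover a pro-rata share of the liquidity the bank generates before failing (e.g.\ $\tfrac{1}{2}\tfrac{\delta}{\delta+1}$ when both run), whereas you assign zero recovery to all pari passu claims upon default — but under either convention the relevant strict inequality $U_1(K_1,R_2)>U_1(R_1,R_2)$ fails exactly when $\frac{\delta}{\delta+1}<\frac{1-t-e}{2}$, so the conclusion is unaffected.
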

\begin{proof}
 To prove this result (and similar subsequent propositions), it is sufficient to calculate through the pay-offs for the alternative decisions of depositors under the possible parameter combinations and establish the frontiers of parameter combinations under which the conditions of an SNNR equilibrium apply. Distinguish now the three possible cases:  \((1-t-e) < \frac{\delta}{\delta+1} \)(which we will denote by \((1)\)), \( \frac{(1-t-e)}{2} < \frac{\delta}{\delta+1} < (1-t-e)\) (which we will denote by \((2)\)) and\( \frac{\delta}{\delta+1}< \frac{(1-t-e)}{2}\) (denoted by \((3)\)). These represent, respectively, the cases in which liquidity buffers provided by central bank collateral are sufficient to compensate for the withdrawal of all deposits, for the withdrawal of (not more than) one depositor or for not even the withdrawal of one depositor. It may be noted that in the case assumed here that \(\Theta =0\), illiquidity of the bank means complete destruction of asset value as the liquidation value of assets is assumed to be zero. The proposition follows from verifying the condition \(U_1(K_1,K_2)>U_1(R_1,K_2) \quad \text{and} \quad U_1(K_1,R_2)>U_1(R_1,R_2) \) for the three different cases distinguished above. One obtains the following scenarios:
 \[\begin{matrix}
 \text{Case} & U_1(K_1,K_2) && U_1(R_1,K_2) &&U_1(K_1,R_2) &&U_1(R_1,R_2)\\ && && && &&\\

 (1)& \frac{(1-t-e)}{2}&&\frac{(1-t-e)}{2}-\epsilon &&\frac{(1-t-e)}{2}&&\frac{(1-t-e)}{2} –\epsilon \\
 &&&&&&&& \\

(2)& \frac{(1-t-e)}{2}&& \frac{(1-t-e)}{2}-\epsilon && \frac{(1-t-e)}{2}&& \frac{1}{2}\frac{\delta}{\delta+1}\\
&&&&&&&& \\
 (3)& \frac{(1-t-e)}{2}&& \frac{\delta}{\delta+1}&& 0 &&\frac{1}{2}\frac{\delta}{\delta+1}\\
 \end{matrix}\]
 It is easily verified that the conditions for an SNNR equilibrium are given if and only if \(\frac{(1-t-e)}{2}<\frac{\delta}{\delta+1}\).
 \end{proof} 
\subsection{Pure reliance on asset fire sales}
Now consider the case in which the central bank does not offer any credit to banks, i.e. accepts no collateral at all. In this case \(\delta=0\), such that addressing deposit outflows will have to rely exclusively on asset liquidation. Assume that the bank does whatever it takes in terms of asset liquidation to avoid illiquidity induced default. The total amount of liquidity that the bank can generate through asset fire sales is \(\frac{\Theta}{\Theta+1}\). While with full reliance on central bank lending, the question was whether the related liquidity buffers would be sufficient. In the present case, two default triggering events need to be considered. Indeed, even if the bank has survived a liquidity withdrawal, it may afterwards be assessed as insolvent and thus be liquidated at the request of the bank supervisor. As noted above, for a given liquidity withdrawal \(x\), the fire sale related loss is \(\frac{1}{\Theta+1}x^{\Theta+1}\). The latter default event occurs if this loss exceeds initial equity\footnote{Note that it is assumed that equity is never sufficient to absorb the losses resulting from a bank default, i.e. it is assumed that \(e \leq 1/(\Theta+1)\). Of course one could also calculate through the opposite case, but it is omitted here as it does not seem to match reality.}. 
\begin{prop}
If \(\delta = 0\), a SNNR equilibrium exists if and only if \[\frac{(1-t-e)}{2} \leq \frac{\Theta}{(\Theta+1)} \  \ \ \text{and} \ \ \  e\geq
\frac{1}{(\Theta+1)}\frac{(1-t-e)}{2}^{\Theta+1}.\]	
\end{prop}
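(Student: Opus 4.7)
The plan is to mirror the case analysis used in the proof of Proposition~\ref{prop1}: set up the $2\times 2$ payoff matrix of the run game under $\delta=0$, identify the regions of parameter space in which an SNNR prevails, and verify that the boundary is described exactly by the two stated inequalities.

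First I would introduce the shorthands $s:=(1-t-e)/2$ for a single depositor's claim, $L:=\Theta/(\Theta+1)$ for the maximal cash the bank can raise through fire sales, and $\ell(w):=\frac{1}{\Theta+1}w^{\Theta+1}$ for the equity loss induced by meeting a withdrawal of size $w$. A withdrawal $w$ is liquidity-feasible iff $w\le L$, and after paying it out the bank remains solvent iff $e\ge \ell(w)$; otherwise the bank is resolved, all remaining assets are liquidated for total book value $L$, and the proceeds are distributed pari passu among the short-term depositors and the long-term debt holder.

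I would then partition the parameter space into cells according to which of $s$ and $2s$ lie below $L$, and within each cell according to whether $e\ge \ell(s)$ and whether $e\ge \ell(2s)$. For each cell I tabulate the four values $U_1(K_1,K_2)$, $U_1(R_1,K_2)$, $U_1(K_1,R_2)$, $U_1(R_1,R_2)$, exactly as in the proof of Proposition~\ref{prop1}. The guiding principle is: whenever the bank survives the postulated withdrawals, the keeping payoff equals $s$ and the running payoff equals $s-\epsilon$, so $K$ strictly dominates; whenever a \emph{single} run triggers either illiquidity ($s>L$) or insolvency ($e<\ell(s)$), the no-run payoff $U_1(K_1,R_2)$ collapses to a pari passu share strictly below $s-\epsilon$ for small $\epsilon$, so $R$ strictly dominates.

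Checking the SNNR inequalities cell by cell, I expect exactly the cell $\{s\le L \text{ and } e\ge \ell(s)\}$ to yield strict dominance of keeping at both decision nodes, while every other cell violates at least one SNNR inequality; this delivers both directions of the iff. The main obstacle will be the bookkeeping in the default branches: one must account for the portion of assets already sold at fire sale prices before default is declared and verify that the pari passu residual shared with the long-term debt holder is strictly below $s-\epsilon$. The most delicate sub-case is the knife-edge where two simultaneous runs are liquidity-feasible ($2s\le L$) but trigger solvency default ($e<\ell(2s)$) while a single run does not — this is precisely the boundary scenario in which the second stated inequality becomes tight, and one must confirm that keeping still strictly dominates running there.
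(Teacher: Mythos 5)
Your proposal follows exactly the route the paper itself indicates for this proposition (and carries out in detail for Proposition~2.5 in the $\Theta=0$ case): tabulate the four payoffs $U_1(D_1,D_2)$ cell by cell and check the two SNNR inequalities, with the full computation deferred by the paper to the cited reference. Your identification of the decisive region $\{s\le L,\ e\ge\ell(s)\}$, of the fact that it is the condition $U_1(K_1,R_2)>U_1(R_1,R_2)$ that fails outside it, and of the knife-edge sub-case $2s\le L$, $e<\ell(2s)$ (where both runners are paid out before resolution, so keeping still strictly dominates) all match the intended argument, so the plan is correct and essentially the same as the paper's.
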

The proposition can be verified by again establishing the strategic game pay-offs and showing under which circumstances the SNNR conditions are met. The proof is provided in \cite{Bindseil}. In sum, to ensure financial stability in the case of absence of central bank credit, minimum liquidity and capital buffers are needed in some appropriate combination to ensure the stability of a given amount of short-term funding. The lower the asset liquidity, the lower the amount of short-term funding that can be sustained for a given level of equity.
\section{Cases in which the banks rely on both types of liquidity buffers}
Now consider the cases in which both \(\Theta,\delta >0\). It is assumed that the ordering of assets is the same for both forms of liquidity generation, i.e. if asset \(i\) is subject to lower fire sale discounts than asset \(j\), then also asset \(i\) will have a lower central bank collateral haircut than asset \(j\). Proposition \ref{prop 3} narrows down the actual range of mixed cases, i.e. cases in which both liquidity sources play a role in the planning of the bank. 
\begin{prop}
\label{prop 3}
If either \(\delta \geq \Theta\), or both \(\Theta >\delta\) and \(\delta/(\delta+1) \geq (1-t-e)/2\) are satisfied, then banks will only rely on central bank credit to address possible deposit withdrawals, and hence the conditions established in Proposition \ref{prop1} apply to the existence of an SNNR equilibrium. 
\end{prop}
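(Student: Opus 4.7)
My plan is to show that under either hypothesis the bank's optimal liquidity-management never employs fire sales in any scenario relevant to verifying the SNNR conditions, so that the payoff matrix, and therefore the criterion for an SNNR equilibrium, reduces to that analysed in the proof of Proposition~\ref{prop1}.

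For the hypothesis $\delta \geq \Theta$, I would establish a pointwise dominance argument. For each asset at position $x \in [0,1]$, pledging yields liquidity $1 - x^\delta$ with zero equity loss, whereas fire-selling yields $1 - x^\Theta$ together with an equity loss of $x^\Theta$. Because $\delta \geq \Theta$ and $x \in [0,1]$ force $x^\delta \leq x^\Theta$, pledging delivers at least as much liquidity as fire sale while causing strictly less loss (strictly so for $x < 1$). Integrating over any disjoint partition $(P,S)$ of $[0,1]$, the maximum total liquidity attainable is therefore $\int_0^1(1-x^\delta)\,dx = \delta/(\delta+1)$, realised by pledging the full portfolio. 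The bank's feasible liquidity set is thus identical to the $\Theta = 0$ case, and Proposition~\ref{prop1}'s payoff computation applies unchanged, yielding the same criterion $\delta/(\delta+1) \geq (1-t-e)/2$.

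For the hypothesis $\Theta > \delta$ with $\delta/(\delta+1) \geq (1-t-e)/2$, I would observe that whenever a withdrawal $w \leq \delta/(\delta+1)$ must be met, one can solve $a - a^{\delta+1}/(\delta+1) = w$ for some $a \in [0,1]$ and pledge $[0,a]$ alone to raise $w$ at zero loss; since any positive fire sale strictly increases equity loss, the pure-pledge response is the strict minimiser. The hypothesis places $w = (1-t-e)/2$ exactly in this regime, so in the $(R,K)$ and $(K,R)$ deviation branches the bank relies only on central bank credit, and the payoffs coincide with those computed in Proposition~\ref{prop1}. For the $(R,R)$ branch one then argues separately that $U_1(R,R)$ cannot exceed $(1-t-e)/2 - \epsilon$, which is still strictly less than $U_1(K,R) = (1-t-e)/2$, irrespective of whether the bank supplements with fire sales or suffers illiquidity default: in both contingencies the running depositors receive at most their nominal claim minus the transaction cost.

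The main obstacle is the $(R,R)$ branch of the second hypothesis, where the bank might in principle invoke fire sales. The crux is verifying that the availability of fire sales in this off-equilibrium contingency does not lift $U_1(R,R)$ above $(1-t-e)/2 - \epsilon$: a short monotonicity argument is needed, combining the fact that the bank's payout to running depositors is capped by their nominal claim with the pro-rata recovery rule in case of solvency-induced resolution, so that the second SNNR inequality again reduces, under the proposition's hypothesis, to the condition of Proposition~\ref{prop1}.
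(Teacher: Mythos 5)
Your argument is correct and follows essentially the same route as the paper's own justification: central bank recourse is costless while fire sales are not, so under $\delta \geq \Theta$ the inequality $x^\delta \leq x^\Theta$ on $[0,1]$ makes pledging pointwise dominant (capping attainable liquidity at $\delta/(\delta+1)$), while under $\Theta > \delta$ with $\delta/(\delta+1) \geq (1-t-e)/2$ the collateral buffer alone covers a single withdrawal at zero loss, so the case analysis of Proposition~\ref{prop1} applies. Your write-up is in fact more explicit than the paper's one-paragraph sketch, notably the integration over arbitrary partitions and the observation that $U_1(R_1,R_2)$ is capped at the nominal claim minus $\epsilon$ in the double-run branch.
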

Taking recourse to the central bank does not cause a loss, while fire sales cause one. If in addition, central bank recourse yields more liquidity (i.e. \(\delta >\Theta\)), then central bank credit strictly dominates asset fire sales as a source of emergency liquidity. If \(\Theta > \delta\) and central bank liquidity buffers allow to address liquidity outflows relating to one depositor, i.e. \(\delta/(\delta +1) \geq (1-t-e)/2\), which, as shown previously, allows to sustain the SNNR equilibrium, then again relying only on central bank credit dominates strategies to rely on both sources. 

The cases in which the bank wants to rely potentially on both funding sources therefore appear to be limited to the ones in which \(\Theta > \delta\) and \((1-t-e)/2>\delta/(\delta+1)\). Again a number of cases have to be distinguished. There will generally be a trade-off between the maximum liquidity generation and the ability to avoid losses, under the optimal use of the two funding sources. For example, the maximum generation of liquidity is achieved through fire sales only, and will be equal to \(\Theta/(1+\Theta)\). However, this also leads to the highest possible fire sales losses and damage to equity, \(1/(1+\Theta)\), and it is realistic to assume that this extent of losses would exceed equity, and anyway if all of the assets of the bank are sold, it has ceased to exist. The lowest generation of liquidity is achieved if all assets are pledged through central bank credit, and in this case liquidity generation is \(\delta/(1+\delta)\) and fire sale losses are 0. Between these two extreme pairs of liquidity generation and fire sale losses, the set of efficient combinations of the two can be calculated. The following proposition addresses the question whether the bank’s strategy should foresee to fire sale the most liquid assets and pledge the rest with the central bank, or the other way round. 
\begin{prop}
\label{prop4}
 In funding strategies to address withdrawals of short-term deposits relying on both funding sources, the bank should always foresee to fire sale the most liquid assets and pledge the less liquid assets. 
\end{prop}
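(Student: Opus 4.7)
The bank's problem can be recast as an optimization: choose disjoint measurable sets $A, B \subseteq [0,1]$, the assets to fire-sell and the assets to pledge, so as to minimize the fire-sale loss $\int_A t^{\Theta}\,dt$ subject to generating at least the required amount $L$ of emergency liquidity,
\[
\int_A (1-t^{\Theta})\,dt + \int_B (1-t^{\delta})\,dt \;\geq\; L.
\]
The proposition then amounts to the assertion that in any optimal pair $(A,B)$ one has $\sup A \leq \inf B$, i.e.\ every fire-sold asset is more liquid than every pledged one. The plan is to establish this via a liquidity-preserving exchange argument.

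Suppose for contradiction there exist indices $t_1 < t_2$ with $t_1 \in B$ and $t_2 \in A$. Consider the perturbation that moves an $\varepsilon$-slice at $t_1$ from $B$ into $A$ while simultaneously moving an $\varepsilon'$-slice at $t_2$ from $A$ into $B$. A direct computation of the liquidity change shows that the constraint is preserved precisely when
\[
\frac{\varepsilon'}{\varepsilon} \;=\; \frac{t_1^{\delta} - t_1^{\Theta}}{t_2^{\delta} - t_2^{\Theta}},
\]
a well-defined positive ratio because, in the mixed regime $\Theta > \delta > 0$ that makes the proposition nontrivial (cf.\ Proposition \ref{prop 3}), the map $t \mapsto t^{\delta} - t^{\Theta}$ is strictly positive on $(0,1)$.

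With this liquidity-preserving exchange, the change in total fire-sale loss is
\[
\Delta\mathrm{loss} \;=\; \varepsilon\, t_1^{\Theta} - \varepsilon'\, t_2^{\Theta} \;=\; \frac{\varepsilon\, t_1^{\delta} t_2^{\delta}\bigl(t_1^{\Theta-\delta} - t_2^{\Theta-\delta}\bigr)}{t_2^{\delta}-t_2^{\Theta}}.
\]
Since $\Theta - \delta > 0$ and $t_1 < t_2$, the factor $t_1^{\Theta-\delta} - t_2^{\Theta-\delta}$ is strictly negative while the denominator is strictly positive, so $\Delta\mathrm{loss} < 0$. This strictly improves the objective without violating the constraint, contradicting the optimality of $(A,B)$ and forcing $\sup A \leq \inf B$.

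The main obstacle I expect is identifying the correct exchange rate $\varepsilon'/\varepsilon$ that keeps the liquidity constraint binding; once that ratio is pinned down the rest collapses to the algebraic identity above. As a preparatory remark, an analogous but simpler monotonicity swap — moving any retained asset whose index lies below some used asset into the same category as the latter — shows that in any optimum the retained assets form a (possibly empty) final segment of $[0,1]$. Combined with the exchange argument, this fully determines the shape of the optimal strategy: an initial prefix for fire sales, a middle layer for pledges, and a tail of retained assets, which is exactly the content of the proposition.
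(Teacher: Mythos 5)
Your argument is correct, and it takes a genuinely different route from the one the paper sketches. The paper's proof (deferred to the cited companion work) proceeds by a global comparison of two parametrized families: ``fire-sell $[0,z]$, pledge $[z,1]$'' versus the reverse ``pledge $[0,z]$, fire-sell $[z,1]$'', showing the former traces out a Pareto-superior frontier of (liquidity generated, fire-sale loss) pairs. You instead allow arbitrary measurable allocations $(A,B)$ and run a local, liquidity-preserving exchange argument. Your algebra checks out: the swap rate $\varepsilon'/\varepsilon=(t_1^{\delta}-t_1^{\Theta})/(t_2^{\delta}-t_2^{\Theta})$ is the right one (the liquidity gain from fire-selling rather than pledging a unit at $t$ is exactly $t^{\delta}-t^{\Theta}$, consistent with the paper's formula for $f(z)$), and the numerator $t_1^{\Theta}t_2^{\delta}-t_1^{\delta}t_2^{\Theta}=t_1^{\delta}t_2^{\delta}\bigl(t_1^{\Theta-\delta}-t_2^{\Theta-\delta}\bigr)$ is indeed negative for $t_1<t_2$ and $\Theta>\delta$. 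The underlying economics is also the right one: the loss per unit of extra liquidity obtained by fire-selling instead of pledging, $t^{\Theta}/(t^{\delta}-t^{\Theta})$, is increasing in $t$, so the cheapest liquidity upgrade comes from the most liquid assets. What your version buys is strictly more than the paper's: it establishes optimality of the ordered threshold allocation within the whole class of measurable strategies, not merely dominance over the single reverse strategy, which is exactly what justifies Proposition \ref{prop5}'s restriction to a single cut-off $z$; it also covers the equity constraint for free, since the swap lowers the loss $k$ at fixed liquidity. Two points to tidy up: the exchange should be performed at Lebesgue density points $t_1,t_2$ of $B$ and $A$ lying in the open interval $(0,1)$ (so the ratio is finite and positive and the slices exist), and the conclusion $\sup A\le\inf B$ holds only up to null sets; neither affects the result.
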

The proof of this proposition is provided in \cite{Bindseil}. The proof relies on showing that with the strategy to fire sale the most liquid assets and pledge the rest, the bank can achieve combinations of liquidity generation and fire sale cost, which are always superior to the combinations under the reverse strategy. The following Proposition \ref{prop5} provides the condition in the case of strategies relying on both funding sources for a SNNR equilibrium, depending on the initial liability structure of the bank and the parameters \(\Theta\) and \(\delta\). 
\begin{prop}
\label{prop5}
Let \(z \in [0,1]\) determine which share of its assets is foreseen by the bank to be used for fire sales (i.e. the less liquid share \(1-z\) of assets are foreseen for pledging with the central bank). Let \(k=h(z)\) be the fire sale losses from fire selling the \(z\) most liquid assets and let \(y=f(z)\) be the total liquidity generated from fire selling the most liquid assets \(z\) and from pledging the least liquid assets \(1-z\). Then a SNNR equilibrium exists if and only if \(\exists z \in [0.1]\) such that
\[ y = f(z) = \frac{\delta}{(\delta +1)}+ \frac{z^{\delta +1}}{(\delta+1)} - \frac{z^{\Theta +1}}{(\Theta+1)}\geq \frac{(1-t-e)}{2}, \qquad
 k=\frac{z^{\Theta +1}}{(\Theta+1)}\leq e.\]
\end{prop}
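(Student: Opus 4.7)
The plan is to set up the computation of the two functions $y=f(z)$ and $k=h(z)$ by direct integration against the liquidity pecking order, and then to verify the SNNR inequalities by running through the payoff table exactly as in Proposition~\ref{prop1}. By Proposition~\ref{prop4}, among all mixed strategies it suffices to consider the one that fire-sells the top $z$ most liquid assets and pledges the remaining bottom $1-z$ as collateral, so the problem reduces to a one-parameter optimization over $z\in[0,1]$.

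First I would compute the two key quantities. Fire-selling the share $[0,z]$ of assets yields net cash
\[
\int_0^z \bigl(1 - x^{\Theta}\bigr)\,dx \;=\; z \;-\; \frac{z^{\Theta+1}}{\Theta+1},
\]
while pledging the share $[z,1]$ of assets with the central bank (after haircuts $x^{\delta}$) yields
\[
\int_z^1 \bigl(1 - x^{\delta}\bigr)\,dx \;=\; (1-z) \;-\; \frac{1}{\delta+1} \;+\; \frac{z^{\delta+1}}{\delta+1}.
\]
Adding these and simplifying gives the stated $f(z) = \tfrac{\delta}{\delta+1} + \tfrac{z^{\delta+1}}{\delta+1} - \tfrac{z^{\Theta+1}}{\Theta+1}$. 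Separately, since fire-selling is the only operation that destroys value (pledged collateral is returned unless the bank defaults), the loss recognized against equity is simply $k=\int_0^z x^{\Theta}dx = z^{\Theta+1}/(\Theta+1)$.

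Next I would translate the SNNR conditions of Section 2.4 into conditions on $(y,k)$. As in the proof of Proposition~\ref{prop1}, the binding requirement is $U_1(K_1,R_2)>U_1(R_1,R_2)$: if depositor 2 runs, depositor 1 must strictly prefer keeping, which requires the bank to honor depositor 2's withdrawal of $(1-t-e)/2$ in full and to remain solvent afterwards, so that depositor 1 later receives $(1-t-e)/2$ rather than a diluted default payout. Honoring the withdrawal demands $y\geq (1-t-e)/2$; surviving solvency verification after the fire sale demands $k\leq e$. Conversely, if either inequality fails for every $z$, the bank either cannot pay depositor 2 (triggering illiquidity default and total asset destruction by the $\Theta$ discount on the remainder) or is closed by the supervisor post-sale, so $U_1(K_1,R_2)<U_1(R_1,R_2)$ because running gives at least a pro-rata share ahead of default losses while keeping gives strictly less; this breaks SNNR. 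The other SNNR inequality $U_1(K_1,K_2)>U_1(R_1,K_2)$ is automatic under the transaction-cost assumption, exactly as in the $\Theta=0$ and $\delta=0$ cases.

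The main obstacle, as in the earlier propositions, is the careful bookkeeping of the default branch of the payoff table: one must verify that when both conditions fail simultaneously for all admissible $z$, running genuinely dominates keeping for depositor 1 in the scenario $D_2=R_2$, since the assumption $e\leq 1/(\Theta+1)$ ensures that equity cannot cover full liquidation losses and hence depositor 1 receives strictly less than $(1-t-e)/2$ upon default. Once the payoff table is set up as in Proposition~\ref{prop1}, restricted to the optimal $z$ range $(1-t-e)/2 \in (\delta/(\delta+1),\,\Theta/(\Theta+1)]$ identified by Proposition~\ref{prop 3}, the equivalence follows: an SNNR equilibrium exists if and only if some $z\in[0,1]$ realizes both $f(z)\geq (1-t-e)/2$ and $h(z)\leq e$ simultaneously.
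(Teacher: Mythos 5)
Your proposal is correct and follows essentially the same route as the paper: the paper defers the detailed verification to \cite{Bindseil}, but its description around Figure \ref{figure 4} --- total liquidity as the area above the discount curve $x^{\Theta}$ up to $z$ plus the area above the haircut curve $x^{\delta}$ from $z$ to $1$, and fire-sale losses as the area below $x^{\Theta}$ up to $z$ --- is exactly your integral computation, and your SNNR verification is the same payoff-table argument used for Proposition \ref{prop1}, with Proposition \ref{prop4} justifying the restriction to the ``fire-sell the most liquid, pledge the rest'' family of strategies. One small point of phrasing: the negation of the existence claim is ``for every $z$ at least one of the two inequalities fails,'' not ``either inequality fails for every $z$'' (nor ``both fail simultaneously for all $z$''); your argument does treat the correct case, since for each $z$ the failing inequality forces either illiquidity default or supervisory closure, but the wording should be tightened.
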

The proof of this proposition is provided \cite{Bindseil}. Figure \ref{figure 4} illustrates the generation of liquidity and fire sale losses under strategy \(z\). The figure reflects that the bank plans to fire sale the most liquid part of its assets \(z\), and pledge with the central bank the least liquid part of assets \(1-z\). Therefore, total liquidity \(y\) that could be generated corresponds to the sum of \(y_1\), the surface above the fire sale loss curve \(x^{\Theta}\) up to \(z\), and \(y_2\), the surface above the haircut curve \(x^{\delta}\), starting at \(z\). Fire sale losses \(k\) will be equal to the surface below the fire sale loss curve between 0 and \(z\).   

\begin{figure}[H]
\includegraphics[scale = 0.725]{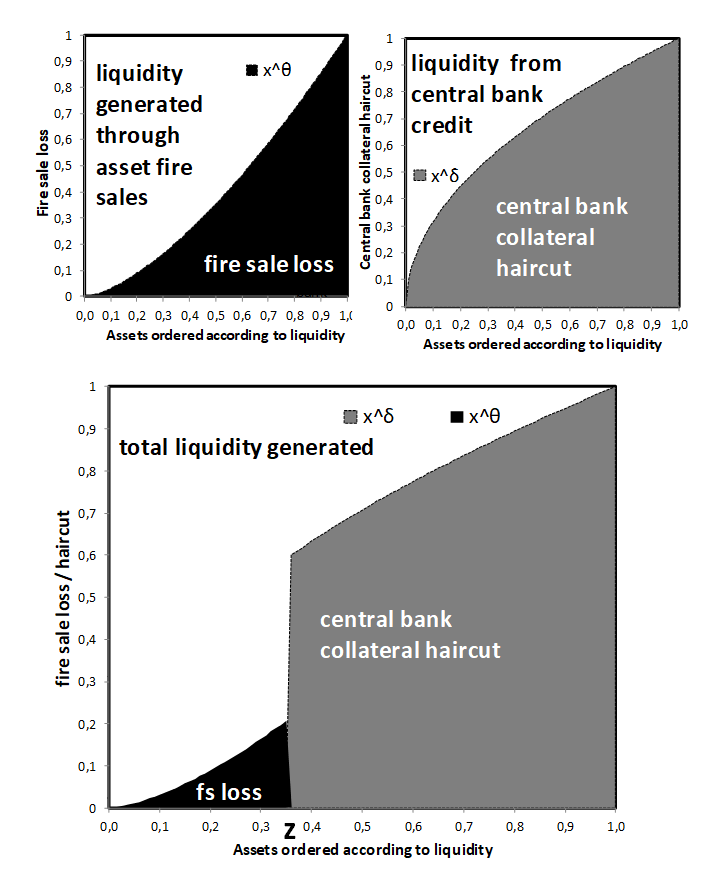}
\caption{Liquidity generation and fire sale losses if the most liquid assets (from \(0\) to \(z\)) are fire sold, while the least liquid assets (from \(z\) to \(1\)) are pledged with the central bank.}
\label{figure 4}
\end{figure}
\section{Stable funding structure with the lowest possible cost} 
In the previous sections, it was assumed that the initial bank balance sheet was given, and the conditions for stability of short-term funding were established. It was shown that depending on the bank’s  liability structure, the haircut \(\delta\) and asset liquidity \(\Theta\), short term bank funding was stable or not. This section will make the liability structure endogenous in a simple setting. It is assumed that different liabilities require different remuneration rates but are at these rates perfectly elastic. 

For given, deterministic \(\delta\) and \(\Theta\), competing banks can be assumed to choose the cheapest possible liability structure as determined by the conditions in the strategic depositor game, such that the single no-run equilibrium applies. Assume that the cost of remuneration of the three asset types are \(r_e\) for equity, \(r_t\) for term funding, and \(0\) for short term deposits. Also assume that \(r_e > r_t > 0\), and that \(\Theta > \delta\). What will in this setting be the composition of the banks’ liabilities?  The objective of the liability structure will be to minimize the average funding cost subject to maintaining a stable short term funding basis. 
\begin{itemize}
	\item One strategy could be to rely only on central bank credit and thus aim to have \(\delta/(\delta+1) \geq (1-e-t)/2\), such that fire sales will not be needed at all as backstop. If fire sales are not needed, then term funding is superior to equity and equity will be set to zero, i.e. liabilities will consist only in term funding \(t\) and short term deposits \(s = 1-t\). Therefore the condition for stable short term funding will be \(\delta/(\delta+1) \geq (1-t)/2\)  which implies \(t^{\ast} =  1-2\delta/(\delta+1)\). The average remuneration rate of bank funding would be \(t^{\ast} r_t\). 
	\item A second strategy would be to rely only on the fire sales approach but to hold the necessary equity. This would mean that the two minimum conditions to be fulfilled are \(z-z^{\Theta}/(1+\Theta) \geq  (1-t-e)/2\) and \(e \geq (z^{(1+\Theta)} /(1+\Theta)\). The funding costs \(tr_t + er_e\)  can be minimized subject to these two constraints to obtain a unique optimum \(t^{\ast}(\Theta, r_e , r_t)\) and \(e^{\ast}(\Theta, r_e , r_t)\), to obtain the minimum average funding costs of the bank liabilities \(t^{\ast} r_t + e^{\ast} r_e\). This case correspond to \(\delta = 0 \) in the following more general formulation.
	\item A third, general strategy is to rely on both sources of funding, and to foresee to fire sell the most liquid assets from \(0\) to \(z\), and to pledge the least liquid assets from \(z\) to \(1\), as described in Propositions \ref{prop4} and \ref{prop5} and in the following optimization problem.
\end{itemize}

\begin{problem}
\label{problem}
Suppose a bank relies on both sources of liquidity.  The general problem of optimal liquidity management is to minimise through the choice of \(t,e,z \in [0,1]\) with \(t+e \in [0,1]\) the average remuneration rate of the banks’ liabilities \(t^{\ast} r_t + e^{\ast} r_e\), with parameters subject to the constraints 
\begin{equation}
\label{suff liquidity condition}
\frac{\delta}{(\delta +1)} + \frac{z^{\delta +1}}{(\delta+1)} -\frac{z^{\Theta +1}}{(\Theta+1)} \geq \frac{(1-t-e)}{2}
\end{equation}
and
\begin{equation}
\label{suff equity condition}
e \geq \frac{z^{\Theta +1}}{(\Theta+1)}.
\end{equation}
\end{problem}
We provide the mathematical solution to this optimization problem in the Appendix \ref{solution}. Appendix \ref{tables and charts} illustrates the functional relationships between the given parameters \((\Theta,\delta,r_e,r_t)\), and the optimum values of \(e^{\ast},t^{\ast},s^{\ast},z^{\ast},r^{\ast}\) (with \(s^{\ast}=1-e^{\ast}-t^{\ast}\) being the implied share of short term funding). Key findings are: the minimum bank intermediation cost \(r^{\ast}\) falls monotonously in \(\Theta\) and in \(\delta\) before reaching zero (Figure A.7). Both the share of assets foreseen for fire selling, \(z^{\ast}\), and the equity ratio fall monotonously in \(\delta\), but both first increase and then decrease again in \(\Theta\) (unless \(\delta\) is high enough to allow for \(z^{\ast}=0\) and \(e^{\ast}=0\)) (Figure A.8 and A.9). The share of long term debt, \(t^{\ast}\), falls monotonously in \(\Theta\) and in \(\delta\) before reaching zero (Figure A.10). The share of short term deposits, \(s^{\ast}\), increases monotonously in \(\Theta\) and in \(\delta\) before reaching 100\% (Figure A.11).
Figure A.3 in the Appendix shows how the liability mix minimizing the funding cost evolves as a function of asset liquidity \(\Theta\), for given collateral liquidity \(\delta = 0.2\) (and again \(r_t = 5\%, \ r_e  = 10\%\)).  It illustrates that the optimal equity share is not a monotonous function of the asset liquidity, but reaches a maximum for around \(\Theta = 0.9\). In contrast, the optimal long and short-term funding shares decline monotonously with an improving asset liquidity. (Figure A.1 shows the same relations, but for \(\delta = 0\))  
Figure A.4 illustrates how the liability mix minimizing funding cost evolves as function of the equity premium \(r_e\), for a given term funding premium \(r_t = 2\%\), for given \((\Theta, \delta) = (0.5, 0.2)\). The equity share \(e^{\ast}\) falls monotonously in the equity premium \(r_e\), while the share of long term debt \(t^{\ast}\) increases monotonously in \(r_e\).  The share of short term funding \(s^{\ast}\) also declines monotonously in \(r_e\). (Figure A.2 shows the same relationships for \(\delta = 0\)).

\section{The central bank collateral framework as a policy tool}
While the primary purpose of the central bank collateral framework is risk protection, the observed collateral policy measures of central banks in 2008/2009 and in 2020 raise the question what exactly the intentions of the central banks have been to widen collateral availability (and hence potential central bank recourse) in particular in a context of deteriorating asset liquidity. The model proposed in this paper allows interpreting the relaxation of the collateral framework as a policy measure:
\begin{enumerate}
	\item First, when \(\Theta\) (asset liquidity) suddenly declines, increasing \(\delta\) (by decreasing haircuts and broadening collateral eligibility) is a way to preserve the no-run equilibrium (the SNNR) and thereby is a necessary condition to prevent increases in central bank reliance, fire sales, and/or defaults. In this sense, it benefits all banks and financial stability in general, and not only those banks who already experience an actual run. Moreover, it may be noted that the model provides support to Bagehot’s (\cite{Bagehot}) ``inertia principle'' according to which the central bank should not tighten its collateral framework in a financial crisis as a reaction to the deterioration of asset liquidity: ``If it is known that the Bank of England is freely advancing on what in ordinary times is reckoned a good security—on what is then commonly pledged and easily convertible—the alarm of the solvent merchants and bankers will be stayed…''. Lowering \(\delta\) when anyway \(\Theta\) declines would mean to decrease particularly strongly the amount of sustainable short term funding and thereby to maximize the probability of a destabilization of bank funding, contributing, instead of preventing, large central bank recourse and fire sales of assets.
\item Second, assuming that a deterioration of \(\Theta\) can be anticipated as a crisis is building up, one could imagine that banks can adjust their liability structure in time. It would come at a high cost because in such a context also investors will have a strong preference for short term assets and the collective attempt of all banks to increase the maturity of their liabilities will therefore lead to a steep increase of bank funding costs (and hence of bank lending rates). This would be pro-cyclical, and an adjustment of the collateral framework parameter \(\delta\) could be seen as a policy tool to prevent such a steep increase of funding costs. 
\item Third, while the effect described in the previous point could in theory also be addressed by conventional monetary policy, i.e. a lowering of central bank interest rates, this has limits as far as the zero lower bound is reached (as it is the case for most central banks in 2020). When this limit is reached, then a widening of collateral availability may become relevant as an alternative approach to lowering effective bank funding costs or at least prevent their increase. 

In the context of the model, assume that \((\Theta_1, \delta_1)\) are the pre-crisis asset parameters and (for a given equity and term funding premia \(r_t, r_e)\)  the minimum funding cost is \(r_1^{\ast} = r^{\ast}(\Theta_1, \delta_1)\), and the related liability composition is \(e_1^{\ast} = e^{\ast}(\Theta_1, \delta_1), \ t_1 = t^{\ast}(\Theta_1, \delta_1), \ s_1^{\ast} = s^{\ast}(\Theta_1, \delta_1)\). Assume that a market liquidity crisis materializes with \(\Theta\) shifting from \(\Theta_1\) to \(\Theta_2 <\Theta_1\). This implies that \(r_2^{\ast} = r^{\ast}(\Theta_2,\delta_1) > r_1^{\ast}\). Moreover, for \((\Theta_2, \delta_1)\), the funding structure \(e_1^{\ast}, t_1^{\ast},s_1^{\ast}\) does not allow for a single no-run equilibrium, but the banks are now in the multiple equilibrium case in which a bank run can occur. Now call \(\delta_2\) the value of \(\delta\) for which \(r^{\ast}(\Theta_2, \delta_2)= r_1^{\ast}\). In words, and assuming that the central bank is constrained by the zero-lower bound, the collateral framework \(\delta_2\) is the one that allows to restore the monetary conditions (in the sense of the bank funding costs, and thus the bank lending rates) that prevailed before the liquidity crisis, however only under the assumption that the banks can rapidly adjust their funding structure towards \(e_2^{\ast} = e^{\ast}(\Theta_2,\delta_2), \ t_2^{\ast} = t^{\ast}(\Theta_2,\delta_2), \ s_2^{\ast} = s^{\ast}(\Theta_2,\delta_2)\). Then, monetary conditions, and financial stability have been restored. Alternatively, the central bank may immediately want to restore the single no-run equilibrium and not take the risk of a run in the possibly lengthy process of the adjustment of banks’ liability structure. This will require choosing another value of \(\delta \geq \delta_2\).  Call \(\delta_3\) the collateral framework which is obviously sufficient to restore immediately a no-run equilibrium, in the sense that \( e_1^{\ast} \leq e^{\ast}(\Theta_2,\delta_3), \ t_1^{\ast} \leq t^{\ast}(\Theta_2,\delta_3), \ s_1^{\ast} \geq s^{\ast}(\Theta_2,\delta_3)\). However, \(\delta_3\) is not strictly necessary for restoring a single no-run equilibrium. Call \(\delta_4\) the smallest value of \(\delta\) for which the funding structure \((e_1^{\ast}, t_1^{\ast}, s_1^{\ast})\) implies a no-run equilibrium for \((\Theta_2,\delta_4)\), i.e. for which the sufficient equity condition \eqref{suff equity condition} and the sufficient liquidity condition \eqref{suff liquidity condition} are both fulfilled, whereby banks can of course recalibrate the parameter \(z\) without any delay.

In what follows, we refer the reader to the example in Appendix \ref{tables and charts}, with \(r_t = 5\%\) and \(r_e =10\%\).  Short-term funding costs are set at 0\%, consistent with the idea that the zero lower bound is constraining. Table A.1 and figures to A.11 show, for these given funding cost parameters, how the optimum values \(r^{\ast}\), \(z^{\ast}\), \(e^{\ast}\), \(t^{\ast}\) and \(s^{\ast}\) depend on the asset fire sale liquidity parameter \(\Theta\) and the collateral framework parameter \(\delta\). 
 
Assume that \((\Theta_1, \delta_1) = (0.7, 0.2)\), and that the banks have chosen the cheapest sustainable funding structure. Following Table A.1, we have \((e_1^{\ast}, t_1^{\ast}, s_1^{\ast})= (67\%, 15\%, 19\%)\), so as to obtain funding costs of \(r_1^{\ast}=2.5\%\). Let the crisis-related asset liquidity be \(\Theta_2=0.2\). The table indicates that \(\delta_2 = 0.4\), as \(r^{\ast}( \Theta_2, \delta_2) = 2.2\% \leq r_1^{\ast}\). A more radical move by the central bank to \(\delta_3=0.7\) also obviously and immediately solves the issue of the bank run, as the optimal and thus sustainable funding mix for \((\Theta_2,\delta_3) = (0.2, 0.7)\) is less demanding for sure than the pre-crisis funding mix, since the table indicates that \(e_1^{\ast} \leq  e^{\ast}(\Theta_2, \delta_3), \ t_1^{\ast} \leq  t^{\ast}(\Theta_2, \delta_3),\ s_1^{\ast}\geq s^{\ast}(\Theta_2, \delta_3)\). Finally, one may check if \(\delta = 0.6\) or even \(\delta = 0.5\) would also ensure the fulfilment of the sufficient equity and sufficient liquidity conditions \eqref{suff equity condition} and \eqref{suff liquidity condition}. Note that for \((\Theta_1, \delta_1), z^{\ast}=44\%\), while for \((\Theta_2,\delta_3)\), like for any \((\Theta, \delta)\) with \(\Theta \leq \delta, \ z^{\ast} = 0\). It turns out that the bank simply needs to set \(z=0\) and both \(\delta_4 =0.6\) and \(\delta_4 = 0.5\) become sufficient from a funding stability perspective, while \(\delta_4 =0.4\) is indeed insufficient. Therefore, to immediately restore both an accommodative monetary policy stance and financial stability, the central bank in this case should move after the liquidity crisis (after the deterioration of \(\Theta\) from \(0.7\) to \(0.2\)) its collateral framework \(\delta\) from \(0.2\) to \(0.5\). If this would be too accommodating from the monetary policy perspective, then the central bank could of course raise the short-term risk free interest rate.

Figure A.6 in the Appendix shows how the liability mix minimizing the funding cost evolves as a function of asset liquidity \(\Theta\), for given collateral liquidity \(\delta=0.2\) (and again \(r_t = 5\%, \ r_e = 10\%\)). It illustrates that the optimal equity share is not a monotonous function of the asset liquidity, but reaches a maximum for around \(\Theta=0.9\). In contrast, the optimal long- and short-term funding shares decline monotonously with an improving asset liquidity. 

Figure A.7 illustrates how the liability mix minimizing funding cost evolves as function of the equity premium \(r_e\), for a given term funding premium \(r_t=2\%\), for given \((\Theta, \delta) = (0.5, 0.2)\). The equity share \(e^{\ast}\) falls monotonously in the equity premium \(r_e\), while the share of long term debt  \(t^{\ast}\) increases monotonously in \(r_e\). The share of short term funding \(s^{\ast}\) also declines monotonously in \(r_e\). 
\end{enumerate}
\section{Conclusion}
This paper provides a model of the role of the central bank collateral framework for the LOLR, bank intermediation costs, banks’ optimal funding structure, and monetary policy at the zero lower bound. Its innovation relative to the existing literature consists in specifying asset liquidity and the collateral framework as continuous functions in the unit interval, which allows building an integrated model encompassing all of these dimensions, and deriving an analytical solution of this model. It was shown how broadening the collateral framework, such as done in e.g. 2008 and 2020 by a number of central banks, can be considered to restore financial stability and adequate monetary conditions after a negative asset liquidity shock. 
Bank asset liquidity and the central bank collateral framework jointly determine financial stability and the ability of the banking system to deliver maturity transformation, which is one of its key functions. 
The model also shows that a widening of the central bank collateral set can \emph{prevent} large recourse to central bank credit by banks suffering from a deterioration of asset liquidity. In this sense, the paper provides further illustration of Bagehot’s (\cite{Bagehot}) conjecture that only the ``brave plan'' of the 19th century Bank of England would be a ``safe'' plan. In other words, by being ``brave'' and increasing \(\delta\) after an exogenous drop of \(\Theta\), the central bank will preserve the banks’ funding stability and thereby minimize the recourse of stressed banks to its credit facilities, and hence be on the ``safe'' side also in terms of financial exposures). 
The model also allowed to identify the impact of asset liquidity and of the central bank collateral framework on funding costs of banks and thereby on monetary policy conditions: first, policy makers need to be aware that a tightening of any of the two emergency liquidity sources of banks needs to be, everything else unchanged, compensated by a lowering of the monetary policy interest rate to maintain unchanged funding costs of the real economy. Second, when the central bank has reached the zero lower bound, and therefore cannot use standard interest rate policies any longer, it can consider to use its collateral framework to counteract a further increase of actual funding costs of banks (and hence of the real economy depending on banks) which would otherwise result from the deteriorated asset liquidity, because of the implied more expensive funding structure.

\appendix
\label{tables and charts}
\includepdf[scale = 0.85, pages=1,pagecommand=\section{Tables and Charts}]{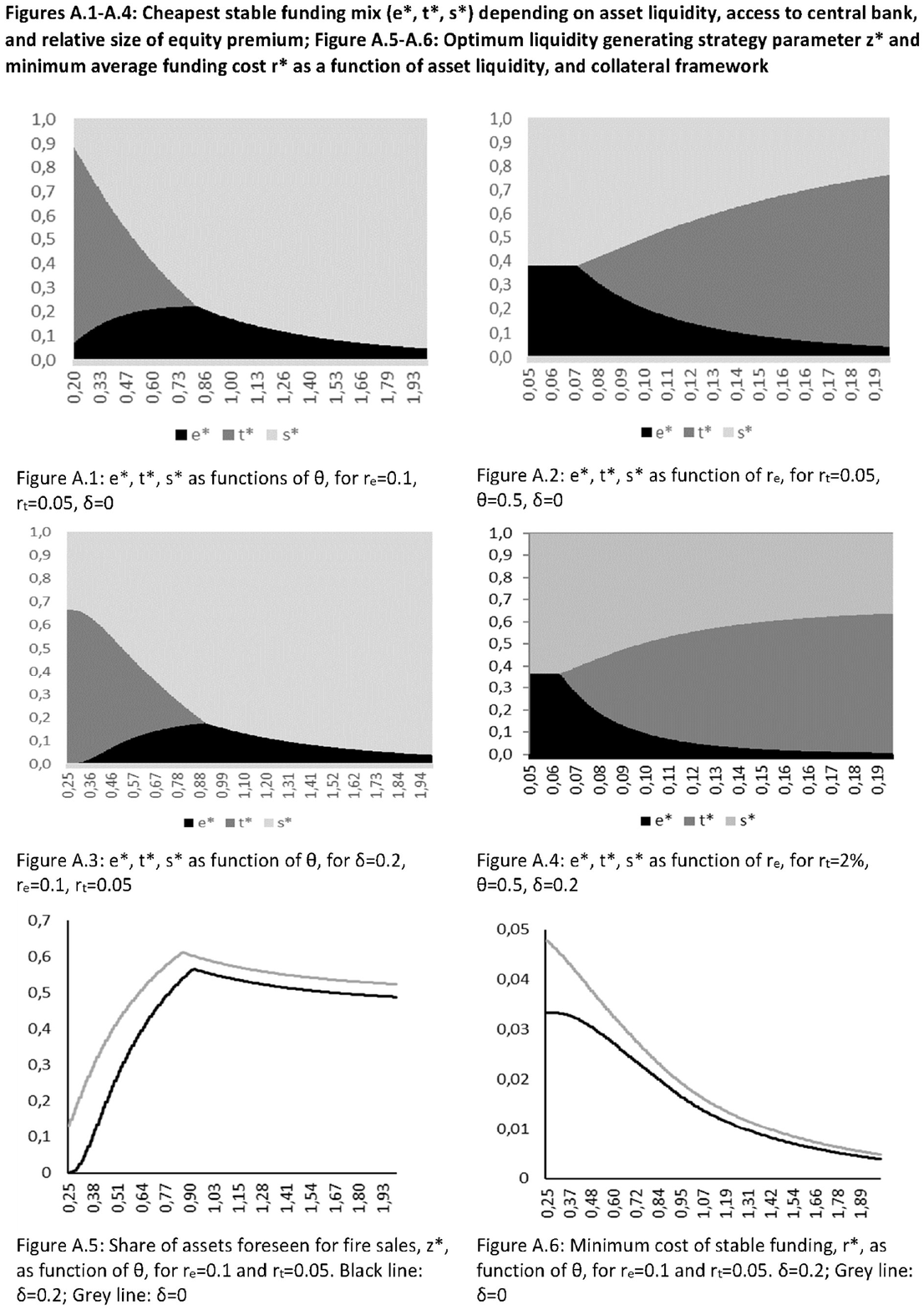}
\includepdf[scale = 0.85, pages=2-,pagecommand={}]{charts.pdf}
\section{Solution to Problem \ref{problem}}
\label{solution}
The problem we want to solve consists of minimizing the function \(r\colon (t,e) \mapsto r_t t + r_e e\) for given parameters \(r_e \geq r_t\), over the subspace of triples \((t,e,z) \in [0,1]^3\), subject to the following constraints:
\begin{equation}
\tag{1}
	\frac{\delta}{\delta +1} + \frac{z^{\delta +1}}{\delta+1} -\frac{z^{\Theta +1}}{\Theta+1} \geq \frac{(1-t-e)}{2}
\end{equation}
\begin{equation}
\tag{2}
e \geq \frac{z^{\Theta +1}}{\Theta+1}	
\end{equation}
\begin{equation}
\tag{3}
t +e \leq 1.
\end{equation}
where \(\Theta > \delta\geq 0\) are given parameters.

Let \(W \subset [0,1]^3\) be the domain of admissible triples \((t,e,z)\). Clearly, \(W\) is a compact subspace of \(\R^3\), therefore \(r \colon W \to \R\) admits a minimum. Since the gradient \(\nabla r = (r_t,r_e,0)\) is non-vanishing everywhere on \(W\) the minimum point lies on the boundary \(\partial W\). We now analyze the behaviour of \(r\) over a decomposition of such boundary. Note that \(\partial W = \cup_{i=1}^{5} W_i\), where we set:
\begin{equation}
	\tag{1}
	W_1 = \left\{(t,e,z) \in W \colon e= 0 \text{, or } t= 0 \text{, or } z = 0\right\}
\end{equation}

\begin{equation}
	\tag{2}
	W_2 = \left\{(t,e,z) \in W \colon e= 1 \text{, or } t= 1 \text{, or } z = 1\right\}
\end{equation}

\begin{equation}
	\tag{3}
	W_3 = \{(t,e,z) \in W \colon t+e = 1\}
\end{equation}

\begin{equation}
	\tag{4}
	W_4 = \left\{(t,e,z) \in W \colon e = \frac{z^{\Theta +1}}{\Theta +1}\right\}
\end{equation}

\begin{equation}
	\tag{5}
	W_5 = \left\{(t,e,z) \in W \colon \frac{\delta}{\delta +1} + \frac{z^{\delta +1}}{\delta+1} -\frac{z^{\Theta +1}}{\Theta+1} = \frac{(1-t-e)}{2}\right\}
\end{equation}
Let's study \(r\) on \(W_1\). If \(e = 0 \) then necessarsily \(z= 0 \), which implies:
\[\frac{\delta}{\delta +1} \geq \frac{(1-t)}{2}\]
If \(\delta \leq 1\) we minimize this by choosing \(\left(t^{\ast},e^{\ast},z^{\ast}\right) = \left(\frac{1-\delta}{1+\delta}, 0 ,0\right)\), which yields \(r^{\ast}=r_t \frac{1-\delta}{1+\delta}\). Note in this case \(t \leq 1\), and equality holds if and only if \(\delta = 1\). If instead \(\delta > 1\), then \((t^{\ast},e^{\ast},z^{\ast}) = (0,0,0)\) is admissible, so we obtain \(r^{\ast} = 0\). From now on we will assume \(\delta <	 1\) to avoid the trivial solution.

Suppose \(t = 0\), then we set \(e = \frac{z^{\Theta +1}}{\Theta +1}\), subject to the condition \[h(z) = \frac{\delta -1}{\delta + 1} + 2 \frac{z^{\delta +1}}{\delta +1} - \frac{z^{\Theta +1}}{\Theta +1} \geq 0.\]
We have \(h(0) <0\), \(h(1) >0\) and \(h'(z) > 0\) for every \(z \in [0,1]\). Therefore \(\exists! z^{\ast} \in [0,1]\) with \(h(z^{\ast}) = 0\). By construction, \(z^{\ast}\) is the smallest value of \(z\) for which \(\left(0,\frac{z^{\Theta +1}}{\Theta +1},z\right)\) satisfies all constraints. Thus we get a candidate triple \(\left(0,\frac{(z^{\ast})^{\Theta +1}}{\Theta +1},z^{\ast}\right)\), which yields \(r^{\ast} = r_e \frac{(z^{\ast})^{\Theta +1}}{(\Theta +1)} \).
Finally, \(z = 0 \) can be reduced to an already analyzed case.

On \(W_2\) and \(W_3\) it is easy to prove there is no better contribute from any possible configuration. Turning to \(W_4\), we have \(e = \frac{z^{\Theta +1}}{\Theta +1}\) and \(t\) is subject to the constraint:
\[t \geq \frac{1-\delta}{1+\delta} + \frac{z^{\Theta +1}}{\Theta + 1} -2 \frac{z^{\delta +1}}{\delta +1} = -h(z)\]
To minimize we set \(t = -h(z)\), and we study \(r = \phi(z) = -r_th(z) + r_e \left(\frac{z^{\Theta +1}}{\Theta +1}\right)\). By solving \(\phi '(z) = 0\), we see that \(\phi\) has a minimum at:
\[\overline{z}  = \left(\frac{2r_t}{r_t + r_e}\right)^{\frac{1}{\Theta - \delta}}.\]
If \(h(\overline{z}) \leq 0\) we have a new candidate triple \((t^{\ast},e^{\ast},z^{\ast}) = \left(-h(\overline{z}),\frac{\overline{z}^{\Theta +1}}{\Theta +1},\overline{z}\right)\). Finally, the case of \(W_5\) can be reconducted to this last one, so the analysis is complete. 

The solution to the original problem is obtained by comparing the two (or three, depending on the cases) candidates selected by the algorithm we have just described.
\begin{rmk}
If we fix \(\delta\) and let \(\Theta\) vary, we can observe that:
\begin{enumerate}
	\item \(z^{\ast} ( \Theta) \leq z^{\ast} ( \Theta')\) if \(\Theta > \Theta '\), where \(z^{\ast}(\Theta)\) denotes the unique solution to \( \frac{\delta -1}{\delta + 1} + 2 \frac{z^{\delta +1}}{\delta +1} - \frac{z^{\Theta +1}}{\Theta +1} = 0\) over \([0,1]\).
	\item \(\lim_{\Theta \to +\infty} \overline{z}(\Theta) = 1\), for \( \overline{z}(\Theta) = \left(\frac{2r_t}{r_t + r_e}\right)^{\frac{1}{\Theta - \delta}}\).
\end{enumerate}
Thus, for \(\Theta -\delta\) big enough, we obtain \(\overline{z}(\Theta) > z^{\ast}(\Theta)\), which forces us to pick \((t^{\ast},e^{\ast},z^{\ast}) = \left(0,\frac{z^{\ast}(\Theta)}{\Theta +1},z^{\ast}(\Theta)\right)\) over \((t^{\ast},e^{\ast},z^{\ast}) = \left(-h(\overline{z}),\frac{\overline{z}^{\Theta +1}}{\Theta +1},\overline{z}\right)\).
On the other hand, for \(\Theta\) close enough to \(\delta\), we have that \((t^{\ast},e^{\ast},z^{\ast}) = \left(-h(\overline{z}),\frac{\overline{z}^{\Theta +1}}{\Theta +1},\overline{z}\right)\) is admissible, so that \((t^{\ast},e^{\ast},z^{\ast}) = \left(0,\frac{z^{\ast}(\Theta)}{\Theta +1},z^{\ast}(\Theta)\right)\) no longer runs among the candidates for the minimization of \(r\).
\end{rmk}
\bibliographystyle{amsplain}
\bibliography{bibliography}
\end{document}